\renewcommand{\vec}[1]{\boldsymbol{#1}}
\newcommand{\sem}[1]{\llbracket {#1} \rrbracket}
\newcommand{\CL}{\ensuremath{\mathrm{CL}}}
\newcommand{\cl}{\ensuremath{\mathit{cl}}}
\newcommand{\bbI}{\ensuremath{\mathbb{I}}}
\newcommand{\bbP}{\ensuremath{\mathbb{P}}\xspace}
\newcommand{\bbF}{\ensuremath{\mathbb{F}}}
\newcommand{\Nat}{\ensuremath{\mathbb{N}}\xspace}
\newcommand{\Ints}{\mathbb{Z}}
\newcommand*{\defeq}{\stackrel{\text{def}}{=}}
\newif\ifmarkingSep
\newcommand{\markingScan}[2]{%
  \ifx\relax#1\empty
  \else
    \ifmarkingSep
      {\ }\relax
    \else
      \markingSeptrue
    \fi
    {#1}{\ast}{#2}\relax
    \expandafter\markingScan
  \fi
}
\def \rightarrowfill{\m@th\mathord{\smash-}\mkern-6mu%
  \cleaders\hbox{$\mkern-2mu\mathord{\smash-}\mkern-2mu$}\hfill
  \mkern-6mu\mathord\rightarrow}
\def \Rightarrowfill{\m@th\mathord{\smash-}\mkern-6mu%
  \cleaders\hbox{$\mkern-2mu\mathord{\smash-}\mkern-2mu$}\hfill
  \mkern-6mu\mathord\Rightarrow}
\def \rightarrowfill{\m@th\mathord{\smash-}\mkern-6mu%
  \cleaders\hbox{$\mkern-2mu\mathord{\smash-}\mkern-2mu$}\hfill
  \mkern-6mu\mathord\rightarrow}
\def \Rightarrowfill{\m@th\mathord{\smash=}\mkern-6mu%
  \cleaders\hbox{$\mkern-2mu\mathord{\smash=}\mkern-2mu$}\hfill
  \mkern-6mu\mathord\Rightarrow}
\def \midrightarrowfill{\m@th\mathord{\smash{\raisebox{.2ex}{$\scriptscriptstyle\mid$}}\!\!\,-}\mkern-6mu%
  \cleaders\hbox{$\mkern-2mu\mathord{\smash-}\mkern-2mu$}\hfill
  \mkern-6mu\mathord\rightarrow}
\def \midRightarrowfill{\m@th\mathord{\smash{\raisebox{.1ex}{$\scriptstyle\mid$}}\!\!\!=}\mkern-6mu%
  \cleaders\hbox{$\mkern-2mu\mathord{\smash=}\mkern-2mu$}\hfill
  \mkern-6mu\mathord\Rightarrow}
\newcommand{\overstackrel}[2]{\mathrel{\mathop{#1}\limits^{#2}}}
\newcommand{\sdefeq}{\mathbin{\smash[t]{\overstackrel{=}{\text{def}}}}}
\newcommand{\trans}[1]{\mathbin{\smash[t]{\overstackrel{\rightarrowfill}{\ #1\ }}}}
\newcommand{\wtrans}[1]{\mathbin{\smash[t]{\overstackrel{\Rightarrowfill}{\ #1\ }}}} 
\renewcommand{\orcidID}[1]{\smash{\href{http://orcid.org/#1}{\protect\raisebox{-1.25pt}{\protect\includegraphics{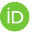}}}}}
\begin{document}
\title{Property Directed Reachability\\ for Generalized Petri Nets}
\titlerunning{PDR for Generalized Petri Nets}
\author{Nicolas Amat\inst{1}\orcidID{0000-0002-5969-7346}
  \and Silvano Dal Zilio\inst{1}\orcidID{0000-0002-6002-2696}
  \and Thomas Hujsa\inst{1}\orcidID{0000-0001-5226-8752}}
\authorrunning{N. Amat et al.}
\institute{LAAS-CNRS, Université de Toulouse, CNRS, INSA, Toulouse, France\\
\email{namat@laas.fr}}
\maketitle
\begin{abstract}
  We propose a semi-decision procedure for checking generalized
  reachability properties, on generalized Petri nets, that is based on
  the Property Directed Reachability (PDR) method. We actually define
  three different versions, that vary depending on the method used for
  abstracting possible witnesses, and that are able to handle problems
  of increasing difficulty. We have implemented our methods in a
  model-checker called \textsc{SMPT} and give empirical evidences that
  our approach can handle problems that are difficult or
  impossible to check with current state of the art tools.
\end{abstract}
%
%
%

\section{Introduction}
\label{sec:introduction}

We propose a new semi-decision procedure for checking reachability
properties on generalized Petri nets, meaning that we impose no
constraints on the weights of the arcs and do not require a finite
state space. We also consider a generalized notion of reachability, in
the sense that we can not only check the reachability of a given
state, but also if it is possible to reach a marking that satisfies a
combination of linear constraints between places, such as
$(p_0 + p_1 = p_2 + 2) \wedge (p_1 \leqslant p_2)$ for
example. Another interesting feature of our approach is that we are
able to return a ``certificate of invariance'', in the form of an
\emph{inductive linear invariant}~\cite{leroux2009general}, when we
find that a constraint is true on all the reachable markings. To the
best of our knowledge, there is no other tool able to compute such
certificates in the general case.

Our approach is based on an extension of the Property Directed
Reachability (PDR) method, originally developed for hardware
model-checking~\cite{jhala_sat-based_2011,hutchison_understanding_2012},
to the case of Petri nets. We actually define three variants of our
algorithm---two of them completely new when compared to our previous
work~\cite{pn2021}---that vary based on the method used for
generalizing possible witnesses and can handle problems of increasing
difficulty.

Reachability for Petri nets is an important and difficult problem with
many practical applications: obviously for the formal verification of
concurrent systems, but also for the study of diverse types of
protocols (such as biological or business processes); the verification
of software systems; the analysis of infinite state systems; etc.  It
is also a timely subject, as shown by recent publications on this
subject~\cite{blondin_directed_2021,dixon_kreach_2020}, but also with
the recent progress made on settling its theoretical
complexity~\cite{czerwinski2020reachability,DBLP:journals/corr/abs-2104-13866},
which asserts that reachability is Ackermann-complete, and therefore
inherently more complex than, say, the coverability problem.

A practical consequence of this ``inherent complexity'', and a general
consensus, is that we should not expect to find a one-size-fits-all
algorithm that could be usable in practice. A better strategy is to
try to improve the performances on some cases---for example by
developing new tools, or optimizations, that may perform better on
some examples---or try to improve ``expressiveness''---by finding
algorithms that can manage new cases, that no other tool can handle.

This wisdom is illustrated by the current state of the art at the
Model Checking Contest (MCC)~\cite{mcc2019}, a competition of
model-checkers for Petri nets that includes an examination for the
reachability problem. Albeit strongly oriented towards the analysis of
bounded nets. As a matter of fact, the top three tools in recent
competitions---\textsc{ITS-Tools}~\cite{its_tools},
\textsc{LoLA}~\cite{lola}, and \textsc{Tapaal}~\cite{tapaal}---all
rely on a portfolio approach.  Methods that have been proposed in this
context include the use of symbolic techniques, such as
$k$-induction~\cite{thierry-mieg_structural_2020}; abstraction
refinement~\cite{cassez2017refinement}; the use of standard
optimizations with Petri nets, like stubborn sets or structural
reductions; the use of the ``state equation''; reduction to integer
linear programming problems; etc.

The results obtained during the MCC highlight the very good
performances achieved when putting all these techniques together, on
bounded nets, with a collection of randomly generated properties.
Another interesting feedback from the MCC is that simulation
techniques are very good at finding a \emph{counter-example} when a
property is not an
invariant~\cite{blondin_directed_2021,thierry-mieg_structural_2020}.

In our work, we seek improvements in terms of both \emph{performance}
and \emph{expressiveness}. We also target what we consider to be a
difficult, and less studied area of research: procedures that can be
applied when a property is an invariant and when the net is unbounded,
or its state space cannot be fully explored. We also focus on the
verification of ``genuine'' reachability constraints, which are not
instances of a coverability problem. These properties are seldom
studied in the context of unbounded nets. Interestingly enough, our
work provides a simple explanation of why coverability problems are
also ``simpler'' in the case of PDR; what we will associate with the
notion of \emph{monotonic formulas}.

Concerning performances, we propose a method based on a well-tried
symbolic technique, PDR, that has proved successful with unbounded
model-checking and when used together with SMT
solvers~\cite{cimatti2014ic3,hoder2012generalized}. Concerning
expressiveness, we define a small benchmark of ``difficult nets'': a
set of synthetic examples, representative of patterns that can make
the reachability problem harder.

\subsubsection*{Outline and Contributions.}
We define background material on Petri nets in
Sect.~\ref{sec:petri-nets}, where we use Linear Integer Arithmetic
(LIA) formulas to reason about nets.
Section~\ref{sec:pdr} describes our decision method, based on PDR and
SMT solvers, for checking the satisfiability of linear invariants over
the reachable states of a Petri net.
Our method builds sequences of incremental invariants using both a
property that we want to disprove, and a stepwise approximation of the
reachability relation. It also relies on a generalization step where
we can abstract possible ``bad states'' into clauses
that are propagated in order to find a counter-example, or to block
inconsistent states.

We describe a first generalization method, based on the upset of
markings, that is able to deal with coverability properties. We
propose a new, dual variant based on the concept of
\emph{hurdles}~\cite{hack1976decidability}, that is without
restrictions on the properties. In this method, the goal is to block
bad sequences of transitions instead of bad states. We show how this
approach can be further improved by defining a notion of saturated
transition sequence, at the cost of adding universal quantification in
our SMT problems.

We have implemented our approach in an open-source tool, called
\textsc{SMPT}, and compare it with other existing tools. In
this context, one of our contributions is the definition of a set of
difficult nets, that characterizes classes of difficult reachability
problems.


\section{Petri Nets and Linear Reachability Constraints}
\label{sec:petri-nets}

Let $\Nat$ denote the set of natural numbers and $\Ints$ the set of
integers. Assuming $P$ is a finite, totally ordered set
$\{p_1, \dots, p_n\}$, we denote by $\Nat^P$ the set of mappings from
$P \to \Nat$ and we overload the addition, subtraction and comparison
operators ($=, \geq, \leq$) to act as their component-wise equivalent
on mappings. A QF-LIA formula $F$, with support in $P$, is a Boolean
combination of atomic propositions of the form $\alpha \sim \beta$,
where $\sim$ is one of $=, \leq$ or $\geq$ and $\alpha, \beta$ are
\emph{linear expressions}, that is, linear combinations of elements in
$\Nat \cup P$. We simply use the term \emph{linear constraint} to
describe $F$.

A \emph{Petri net} $N$ is a tuple
$(P, T, \textbf{pre}, \textbf{post})$ where $P = \{p_1, \dots, p_n\}$
is a finite set of places, $T$ is a finite set of transitions
(disjoint from $P$), and $\textbf{pre} : T \to \Nat^P$ and
$\textbf{post} : T \to \Nat^P$ are the pre- and post-condition
functions (also called the flow functions of $N$). A state $m$ of a
net, also called a \emph{marking}, is a mapping of $\Nat^P$. We say
that the marking $m$ assigns $m(p_i)$ {tokens} to place $p_i$. A
marked net $(N, m_0)$ is a pair composed from a net and an initial
marking $m_0$.

A transition $t \in T$ is \textit{enabled} at marking $m \in \Nat^P$
when $m \geqslant \textbf{pre}(t)$. When $t$ is enabled at $m$, we can
fire it and reach another marking $m' \in \Nat^P$ such that
$m' = m - \textbf{pre}(t) + \textbf{post}(t)$. We denote this
transition $m \trans{t} m'$. The difference between $m$ and $m'$ is a
mapping $\Delta(t) = \textbf{post}(t) - \textbf{pre}(t)$ in $\Ints^P$,
also called the \emph{displacement} of $t$.

By extension, we say that a \textit{firing sequence}
$\sigma = t_1\, \dots\, t_k \in T^*$ can be fired from $m$, denoted
$m \wtrans{\sigma} m'$, if there exist markings $m_0, \dots, m_k$ such
that $m = m_0$, $m' = m_k$ and $m_i \trans{t_{i+1}} m_{i+1}$ for all
$i < k$. We can also simply write $m \to^\star m'$. In this case, the
displacement of $\sigma$ is the mapping
$\Delta(\sigma) = \Delta({t_1}) + \dots + \Delta({t_k})$.
We denote by $R(N, m_0)$ the set of markings reachable from $m_0$ in
$N$. A marking $m$ is $k$-{bounded} when each place has at most $k$
tokens. By extension, we say that a marked Petri net $(N, m_0)$ is
bounded when there is $k$ such that all reachable markings are
$k$-bounded.

\begin{figure}[tb]
  \centering
  {
    \begin{subfigure}[t]{.48\textwidth}
      \centering
      \includegraphics[width=0.63\textwidth]{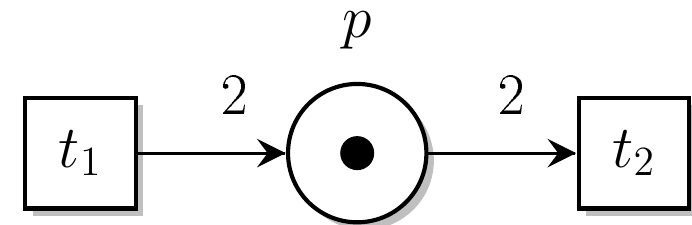}
    \end{subfigure}
    \begin{subfigure}[c]{.48\textwidth}
      \centering
      \includegraphics[width=\textwidth]{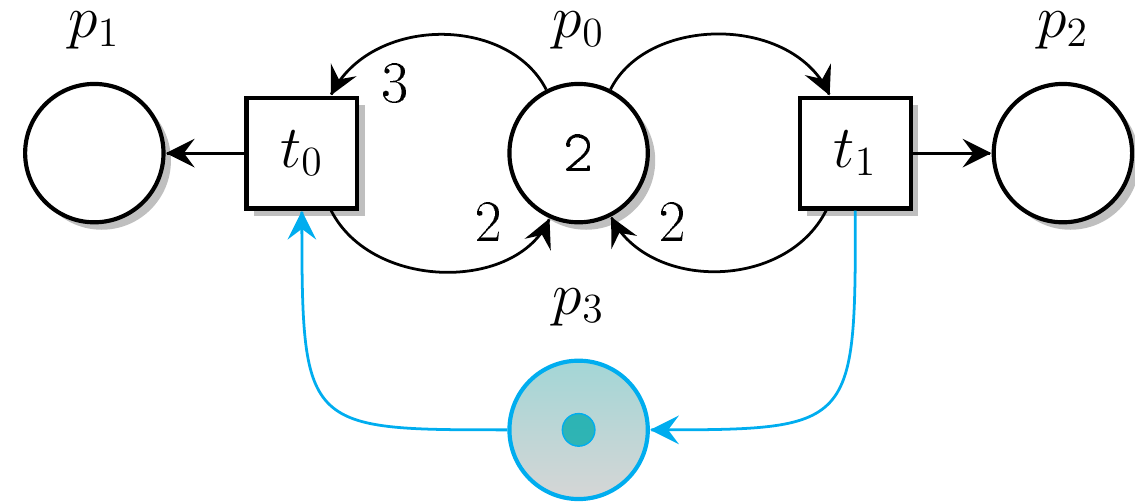}
    \end{subfigure}\hfill
  }
  \caption{Two examples of Petri nets: Parity (left) and PGCD (right).}
  \label{fig:nets}
\end{figure}

While reachable states are computed by adding a linear combination of
``displacements'' (vectors in $\Ints^P$), the set $R(N, m_0)$ is not
necessarily semilinear or, equivalently, definable using Presburger
arithmetic~\cite{ginsburg1966semigroups,leroux2009general}. This is a
consequence of the constraint that transitions must be enabled before
firing.  But there is still some structure to the set $R(N, m_0)$,
like for instance the following monotonicity constraint:
\begin{equation*}
  \forall m \in \Nat^P \,.\  m_1 \wtrans{\sigma} m_2 \ \text{ implies }\  m_1 +
  m \wtrans{\sigma} m_2 + m \tag{H1} \label{eq:h0}
\end{equation*}

We have other such results, such as with the notion of
\textit{hurdle}~\cite{hack1976decidability}. Just as $\textbf{pre}(t)$
is the smallest marking for which a given transition $t$ is enabled,
there is a smallest marking at which a given firing sequence $\sigma$
is fireable. This marking, denoted by $H(\sigma)$, has a simple
inductive definition:
\begin{equation*}
  H(t) = \textbf{pre}(t) \quad\text{ and }\quad
  H(\sigma_1 \cdot \sigma_2) = \max \left ( H(\sigma_1), H(\sigma_2) -
  \Delta(\sigma_1) \right ) \tag{H2} \label{eq:h1}
\end{equation*} 
Given this notion of hurdles, we obtain that $m \wtrans{\sigma} m'$ if
and only if (1) the sequence $\sigma$ is enabled: $m \geqslant H(\sigma)$,
and (2) $m'= m + \Delta(\sigma)$. We use this result in the second
variant of our method.

We can go a step further and characterize a necessary and sufficient
condition for firing the sequence $\sigma . \sigma^k$, meaning firing
the same sequence more than once.
Given $\Delta(\sigma)$, a place $p$ with a negative displacement (say
$-d$) means that we ``loose'' $d$ token each time we fire
$\sigma$. Hence we should budget $d$ tokens in $p$ for each new
iteration. On the opposite, nothing is needed for places with a
positive displacement, which accrue tokens.

Therefore we have $m \wtrans{\sigma} \wtrans{\sigma^k} m'$ if and only
if (1) $m \geqslant H(\sigma) + k \cdot \max(\vec{0}, - \Delta(\sigma))$,
and (2) $m'= m + (k+1) \cdot \Delta(\sigma)$.  Equivalently, if we
denote by $m^+$ the ``positive'' part of mapping $m$, such that
$m^+(p) = 0$ when $m(p) \leqslant 0$ and $m^+(p) = m(p)$ otherwise, we
have:
\begin{equation*}
  H(\sigma^{k+1}) = \max \left  ( H(\sigma),  H(\sigma) - k \cdot
    \Delta(\sigma) \right ) = H(\sigma) + k \cdot \left (
    - \Delta(\sigma) \right )^+  \tag{H3} \label{eq:h2}
\end{equation*}

\subsection{Examples}

We give two simple examples of unbounded nets in Fig.~\ref{fig:nets},
which are both part of our benchmark. Parity has a single place, hence
its state space can be interpreted as a subset of \Nat: with an
initial marking of $1$, this is exactly the set of odd numbers (and
therefore state $0$ is not reachable). We are in a special case where
the set $R(N, m_0)$ is semilinear. For instance, it can be seen as
solution to the constraint $\exists k . (p = 2 k + 1)$, or
equivalently $p \equiv 1 \pmod 2$. But it cannot be expressed with a
linear constraint involving only the variable $p$ without
quantification or modulo arithmetic. This example can be handled by
most of the tools used in our experiments, e.g. with the help of
$k$-induction.

In PGCD, transitions $t_0/t_1$ can decrement/increment the marking of
$p_0$ by $1$. Nonetheless, with this initial state, it is the case
that the number of occurrences of $t_0$ is always less than the one of
$t_1$ in any feasible sequence $\sigma$. Hence the two predicates
$p_0 \geq 2$ and $p_2 \geq p_1$ are valid invariants. (Since some
tools do not accept literals of the form $p \geq q$, we added the
``redundant'' place $p_3$ so we can restate our second invariant as
$p_3 \geq 1$.) These invariants cannot be proved by reasoning only on
the displacements of traces (using the state equation) and are already
out of reach for \textsc{LoLA} or \textsc{Tapaal}.

\subsection{Linear Reachability Formulas}

We can revisit the semantics of Petri nets using linear
predicates.
In the following, we use $\vec{p}$ for the vector $(p_1, \dots, p_n)$,
and $F(\vec{p})$ for a formula with variables in $P$. We also simply
use $F(\vec{\alpha})$ for the substitution
$F\{p_1 \leftarrow \alpha_1\}\dots\{p_n \leftarrow \alpha_n\}$, with
$\vec{\alpha} = (\alpha_1, \dots, \alpha_n)$ a sequence of linear
expressions.
We say that a mapping $m$ of $\Nat^P$ is a \emph{model} of $F$,
denoted $m \models F$, if the ground formula
$F(m) = F(m(p_1), \dots, m(p_n))$ is true. Hence we can also interpret
$F$ as a predicate over markings. Finally, we define the semantics of
$F$ as the set $\sem{F} = \{ m \in \Nat^P \mid m \models F\}$.

As usual, we say that a predicate $F$ is \emph{valid}, denoted
$\models F$, when all its interpretations are true
($\sem{F} = \Nat^P$); and that $F$ is \emph{unsatisfiable} (or simply
\texttt{unsat}), denoted $\nvDash F$, when $\sem{F} = \emptyset$.

We can define many properties on the markings of a net $N$ using this
For instance, we can model the set of markings $m$ such that some
transition $t$ is enabled using predicate $\mathrm{ENBL}_t$ (see
Equation~\eqref{eq:enbl} below).
We can also define a linear predicate to describe the relation between
the markings before and after some transition $t$ fires. To this end,
we use a vector $\vec{p'}$ of ``primed variables''
$(p'_1, \dots, p'_n)$, where $p'_i$ will stand for the marking of
place $p_i$ after a transition is fired.
With this convention, formula $\mathrm{FIRE}_t(\vec{p}, \vec{p'})$ is
such that $\mathrm{FIRE}_t(m, m')$ entails $m \trans{t} m'$ or
$m = m'$ when $t$ is enabled at $m$. With all these notations, we can
define a predicate $\mathrm{T}(\vec{p}, \vec{p'})$ that ``encodes''
the effect of firing at most one transition in the net $N$.

\begin{align}
  \mathrm{GEQ}_m(\vec{p}) &\ \defeq\  \textstyle \bigwedge_{i \in 1..n} \left ( p_i \geqslant
                            m(p_i) \right  )\\
  \mathrm{ENBL}_t(\vec{p}) &\ \defeq\  \textstyle \bigwedge_{i \in 1..n} \left ( p_i \geqslant
                             {\textbf{pre}(t)}(p_i) \right )  \ = \  \mathrm{GEQ}_{H(t)}(\vec{p})\label{eq:enbl}\\
  \Delta_t(\vec{p}, \vec{p'}) &\ \defeq\ \textstyle  \bigwedge_{i \in
                                1..n} \left ( p_i' =
                                p_i + \textbf{post}(t)(p_i) -
                                \textbf{pre}(t)(p_i) \right ) \label{eq:delta}\\
  \mathrm{EQ}(\vec{p}, \vec{p'}) &\ \defeq\ \textstyle \bigwedge_{i \in
                                   1..n} \left ( p'_i =
                                   p_i \right )\\
                               \displaybreak
  \mathrm{FIRE}_t(\vec{p}, \vec{p'}) &\ \defeq\ \mathrm{EQ}(\vec{p},
                                       \vec{p'}) \vee \left
                                       ({\mathrm{ENBL}}_t(\vec{p})
                                       \land \Delta_t(\vec{p},
                                       \vec{p'}) \right ) \\
  \mathrm{T}(\vec{p}, \vec{p'}) &\ \defeq\ \textstyle \mathrm{EQ}(\vec{p}, \vec{p'})
                                  \lor
                                  \bigvee_{t \in T} \left( \mathrm{ENBL}_t(\vec{p}) \land
                                  \Delta_t(\vec{p}, \vec{p'}) \right )
\end{align}

In our work, we focus on the verification of \textit{safety}
properties on the reachable markings of a marked net $(N,
m_0)$. Examples of properties that we want to check include: checking
if some transition $t$ is enabled (commonly known as
\emph{quasi-liveness}); checking if there is a deadlock; checking
whether some linear invariant between place markings is true; \dots
All properties that can be expressed using a linear predicate.

\begin{definition}[Linear Invariants and Inductive Predicates]
  A linear predicate $F$ is an invariant on $(N, m_0)$ if and only if
  we have $m \models F$ for all $m \in R(N,m_0)$. It is inductive if
  for all markings $m$ we have $m \models F$ and $m \to m'$ entails
  $m' \models F$.
\end{definition}

It is possible to characterize inductive predicates using our logical
framework. Indeed, $F$ is inductive if and only if the QF-LIA formula
(i) $F(\vec{p}) \land T(\vec{p}, \vec{p'}) \land \neg F(\vec{p'})$ is
\texttt{unsat}. Also, an inductive formula is an invariant when (ii)
$m_0 \models F$, or equivalently $\models F(m_0)$. As a consequence, a
sufficient condition for a predicate $F$ to be invariant is to have
both conditions (i) and (ii); conditions that can be checked using a
SMT solver. Unfortunately, the predicates that we need to check are
often not inductive. In this case, the next best thing is to try to
build an inductive invariant, say $R$, such that
$\sem{R} \subseteq \sem{F}$ (or equivalently $R \land \neg F$
\texttt{unsat}). This predicate provides a certificate of invariance
that can be checked independently.

\begin{lemma}[Certificate of Invariance]\label{lemma:certificate}
  A sufficient condition for $F$ to be invariant on $(N, m_0)$ is to
  exhibit a linear predicate $R$ that is (i) initial: $R({m_0})$
  \texttt{valid}; (ii) inductive:
  $R(\vec{p}) \land T(\vec{p}, \vec{p'}) \land \neg R(\vec{p\ })$
  \texttt{unsat}; and (iii) that entails $F$, for instance:
  $R \land \neg F$ \texttt{unsat}.
\end{lemma}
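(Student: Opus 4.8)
The plan is to show that the three stated conditions on $R$ imply that every reachable marking satisfies $F$. This is essentially an induction on the length of a firing sequence witnessing reachability, using condition (ii) to carry the inductive step and condition (iii) to transfer the conclusion from $R$ to $F$.

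First I would argue that $R$ is an invariant on $(N,m_0)$, i.e. $m \models R$ for every $m \in R(N,m_0)$. Take $m \in R(N,m_0)$, so there is a firing sequence $m_0 \wtrans{\sigma} m$ with $\sigma = t_1 \cdots t_k$ and intermediate markings $m_0, m_1, \dots, m_k = m$ such that $m_i \trans{t_{i+1}} m_{i+1}$. I proceed by induction on $i$. For the base case, condition (i) says $R(m_0)$ is \texttt{valid}, hence $m_0 \models R$. For the inductive step, assume $m_i \models R$; since $m_i \trans{t_{i+1}} m_{i+1}$, the pair $(m_i, m_{i+1})$ satisfies $T(\vec{p},\vec{p'})$ — indeed it satisfies the disjunct $\mathrm{ENBL}_{t_{i+1}}(\vec p) \wedge \Delta_{t_{i+1}}(\vec p, \vec{p'})$ because $t_{i+1}$ is enabled at $m_i$ and $m_{i+1} = m_i - \textbf{pre}(t_{i+1}) + \textbf{post}(t_{i+1})$. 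If we had $m_{i+1} \not\models R$, then $(m_i, m_{i+1})$ would be a model of $R(\vec p) \wedge T(\vec p, \vec{p'}) \wedge \neg R(\vec{p'})$, contradicting condition (ii) (\texttt{unsat}). Hence $m_{i+1} \models R$, closing the induction and giving $m = m_k \models R$.

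Second, I would use condition (iii): since $R \wedge \neg F$ is \texttt{unsat}, we have $\sem{R} \subseteq \sem{F}$, so from $m \models R$ we conclude $m \models F$. As $m$ was an arbitrary reachable marking, $F$ is an invariant on $(N, m_0)$ by definition, which is exactly the claim.

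The argument is essentially routine; there is no deep obstacle. The only point requiring a little care is the bridge between the operational relation $m \trans{t} m'$ and its logical encoding $T(\vec p, \vec{p'})$ — one must check that an actual transition step does produce a model of $T$ (the ``soundness'' direction of the encoding), which is immediate from the definitions of $\mathrm{ENBL}_t$ and $\Delta_t$ and the firing rule $m' = m - \textbf{pre}(t) + \textbf{post}(t)$. Everything else is the standard ``initiation plus consecution implies invariance'' schema, and the substitution conventions $R(m_0)$, $R(\vec p)$, etc., already fixed in the excerpt make the translation between ``$m \models R$'' and ``$R(m)$ true'' transparent.
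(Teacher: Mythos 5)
Your proof is correct: the paper states this lemma without giving a proof (it is treated as the routine ``initiation plus consecution implies invariance'' fact), and your argument --- induction along the firing sequence using condition (ii), with condition (iii) transferring $\sem{R} \subseteq \sem{F}$ --- is exactly the standard reasoning the authors leave implicit. Your remark on checking that an operational step $m \trans{t} m'$ yields a model of $T(\vec{p},\vec{p'})$ is the right (and only) point needing care, and you handle it correctly.
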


This result is in line with a property proved by
Leroux~\cite{leroux2009general}, which states that when a final
configuration $m$ is not reachable there must exist a Presburger
inductive invariant that contains $m_0$ but does not contain $m$. This
result does not explain how to effectively compute such an
invariant. Moreover, in our case, we provide a method that works with
general linear predicates, and not only with single configurations.
On the other side of the coin, given the known results about the
complexity of the problem, we do not expect our procedure to be
complete in the general case.

In the next section, we show how to (potentially) find such
certificates using an adaptation of the PDR method. An essential
component of PDR is to abstract a ``scenario'' leading to the model of
some property $F$---say a transition $m \wtrans{\sigma} m'$ with
$m' \models F$---into a predicate that contains $m$ (and potentially
many more similar scenarios). More generally, a \emph{generalization}
of the trio $(m, \sigma, F)$ is a predicate $G$ satisfied by $m$ such that
$m_1 \models G$ entails that there is $m_1 \to^\star m_2$ with
$m_2 \models F$.

We can use properties (H1)--(H3), defined earlier, to build
generalizations.

\begin{lemma}[Generalization\label{lemma:g}]
  Assume we have a scenario such that $m \wtrans{\sigma} m'$ and
  $m' \models F$. We have three possible generalizations of the trio
  $(m, \sigma, F)$.
  \begin{itemize}
  \item[\emph{(G1)}] If property $F$ is monotonic, then
    $m_1 \models \mathrm{GEQ}_m(\vec{p})$ implies there is
    $m_2 \geqslant m'$ such that $m_1 \wtrans{\sigma} m_2$ and
    $m_2 \models F$.
  \item[\emph{(G2)}] If
    $m_1 \models \mathrm{GEQ}_{H(\sigma)}(\vec{p}) \land F(\vec{p} +
    \Delta(\sigma))$ then $m_1 \wtrans{\sigma} m_2$ and
    $m_2 \models F$.
  \item[\emph{(G3)}] Assume $a, b$ are mappings of $\Nat^P$ such that
    ${a} = H(\sigma)$ and ${b} = \left ( - \Delta(\sigma) \right )^+$,
    with the notations used in (H3). Then
    \[
      m_1 \models \exists k . \left (
        \begin{array}[c]{l}
          \left [ \bigwedge_{i \in
          1..n} (p_i \geqslant a(i) + k \cdot b(i)) \right ]\\
          \ \land
          F(\vec{p} + (k + 1) \cdot \Delta(\sigma))\\
        \end{array}  \right )
      \ \text{ implies } \
      \left \{ \begin{array}[c]{l}
                 \exists k . m_1 \wtrans{\sigma^{k+1}} m_2\\
                 \text{ and } m_2 \models F
      \end{array} \right .
    \]
  \end{itemize}
\end{lemma}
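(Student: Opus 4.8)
The plan is to handle the three generalizations separately, each time reducing the claim to one of the structural facts (H1)--(H3) recalled for hurdles, together with the single observation that substituting $\vec p + \vec v$ into $F$ amounts to evaluating $F$ at the shifted marking.

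For (G1), I would start from $m_1 \models \mathrm{GEQ}_m(\vec p)$, i.e. $m_1 \geqslant m$, and write $m_1 = m + d$ with $d \in \Nat^P$. Applying the monotonicity property (H1) to the scenario $m \wtrans{\sigma} m'$ with the extra marking $d$ gives $m_1 = m + d \wtrans{\sigma} m' + d$, so I set $m_2 := m' + d \geqslant m'$. It then remains to see $m_2 \models F$: since $F$ is monotonic, $\sem F$ is upward closed for the component-wise order, so from $m' \models F$ and $m_2 \geqslant m'$ we conclude $m_2 \models F$. I would first make explicit the definition of a monotonic formula used here, since that is the one ingredient of this case that is not purely mechanical.

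For (G2), I would invoke the hurdle characterization stated right after (H2): $m_1 \wtrans{\sigma} m_2$ holds iff $m_1 \geqslant H(\sigma)$ and $m_2 = m_1 + \Delta(\sigma)$. The hypothesis $m_1 \models \mathrm{GEQ}_{H(\sigma)}(\vec p)$ is exactly $m_1 \geqslant H(\sigma)$, so $\sigma$ is fireable from $m_1$ and reaches the unique marking $m_2 = m_1 + \Delta(\sigma)$ (which lies in $\Nat^P$ because $\sigma$ is fireable). The second conjunct $m_1 \models F(\vec p + \Delta(\sigma))$ unfolds, by the substitution convention, to ``$F$ holds at $m_1 + \Delta(\sigma)$'', i.e. $m_2 \models F$. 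For (G3) the argument is the same, using (H3): it identifies $H(\sigma^{k+1}) = H(\sigma) + k\cdot(-\Delta(\sigma))^+ = a + k\cdot b$, together with the companion fact that $m_1 \wtrans{\sigma^{k+1}} m_2$ iff $m_1 \geqslant a + k\cdot b$ and $m_2 = m_1 + (k+1)\cdot\Delta(\sigma)$. From the hypothesis I would pick a witness $k_0 \in \Nat$ with $m_1 \geqslant a + k_0 b$ and $m_1 \models F(\vec p + (k_0+1)\cdot\Delta(\sigma))$; the first condition makes $\sigma^{k_0+1}$ fireable from $m_1$ with $m_2 = m_1 + (k_0+1)\cdot\Delta(\sigma)$, and the second says precisely $m_2 \models F$. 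Taking $k = k_0$ yields the conclusion.

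The arguments are short; the closest thing to an obstacle is the bookkeeping around the shifted substitution $F(\vec p + \vec v)$ when $\vec v$ has negative components: the formula is interpreted over $\Ints$, so one must check that the point at which it is actually evaluated, namely $m_2$, is a genuine element of $\Nat^P$ --- which holds here because $m_2$ is reached by an actual firing. I would also state the monotonic-formula definition up front for (G1), and in (G3) make explicit that the existential variable $k$ ranges over $\Nat$ (equivalently, add the conjunct $k \geqslant 0$), so that $\sigma^{k+1}$ is meaningful and a witness of the SMT formula is a witness for the firing.
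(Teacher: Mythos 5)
Your proposal is correct and matches the paper's argument: the paper's proof is simply the observation that (G1)--(G3) follow directly from (H1)--(H3), and you fill in exactly those routine details (shift by $d = m_1 - m$ plus upward-closure of $\sem{F}$ for (G1), the hurdle characterization after (H2) for (G2), and the $\sigma^{k+1}$ characterization from (H3) for (G3)).
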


\begin{proof}
  Each property is a direct result of properties (H1) to (H3).
\end{proof}

Property (G3) is the first and only instance of linear formula using
an extra variable, $k$, that is not in $P$. The result is still a
linear formula though, since we never need to use the product of two
variables. This generalization is used when we want to ``saturate the
sequence $\sigma$''. This is the only situation where we may need to
deal with quantified LIA formulas. Another solution would be to
replace each quantification with the use of modulo arithmetic, but
this operation may be costly and could greatly increase the size of
our formulas. It would also not cut down the complexity of the SMT
problems.

\section{Property Directed Reachability}
\label{sec:pdr}

Some symbolic model-checking procedure, such as
BMC~\cite{biere_symbolic_1999} or
$k$-induction~\cite{sheeran_checking_2000}, are a good fit when we try
to find counter-examples on infinite-state systems. Unfortunately,
they may perform poorly when we want to check an invariant. In this
case, adaptations of the PDR
method~\cite{jhala_sat-based_2011,hutchison_understanding_2012} (also
known as IC3, for ``Incremental Construction of Inductive Clauses for
Indubitable Correctness'') have proved successful.

We assume that we start with an initial state $m_0$ satisfying a
linear property, $\bbI$, and that we want to prove that property
$\bbP$ is an invariant of the marked net $(N, m_0)$. (We use
blackboard bold symbols to distinguish between parameters of the
problem, and formulas that we build for solving it.)
When checking for the reachability from the initial state, we can
simple choose $\bbI$ such that $\sem{\bbI} = \{ m_0\}$.

We define $\mathbb{F} = \neg \mathbb{P}$ as the ``set of feared
events''; such that $\mathbb{P}$ is not an invariant if we can find
$m$ in $R(N, m_0)$ such that $m \models \mathbb{F}$.
To simplify the presentation, we assume that $\bbF$ is a conjunction
of literals (a cube), meaning that $\bbP$ is a clause. In practice, we
assume that $\bbF$ is in Disjunctive Normal Form.

PDR is a combination of induction, over-approximation, and SAT or SMT
solving.
The goal is to build an incremental sequence of predicates
$F_0, \dots, F_k$ that are ``inductive relative to stepwise
approximations'': such that $m \models F_i$ and $m \to m'$ entails
$m' \models F_{i+1}$, but not $m' \models \mathbb{F}$.
The method stops when it finds a counter-example, or when we find that one of
the predicates $F_i$ is inductive.

We adapt the PDR approach to Petri nets, using linear predicates and
SMT solvers for the QF-LIA and LIA logics in order to learn,
generalize, and propagate new clauses. The most innovative part of our
approach is the use of specific ``generalization algorithms'' that
take advantage of the Petri nets theory, like the use of hurdles for
example.

\subsection{Algorithm}

Our implementation follows closely the algorithm for IC3 described
in~\cite{hutchison_understanding_2012}. We only give a brief sketch of
the OARS construction.

The main function, \hyperref[fun:prove]{\texttt{prove}}, computes an
\textit{Over Approximated Reachability Sequence} (OARS)
$(F_0, \dots, F_{k})$ of linear predicates, called \emph{frames}, with
variables in $\vec{p}$. An OARS meets the following constraints: (1)
it is monotonic: $F_i \land \neg F_{i+1}$ \texttt{unsat} for
$0 \leqslant i < k$; (2) it contains the initial states:
$\mathbb{I} \land \neg F_0$ \texttt{unsat}; (3) it does not contain
feared states: $F_{i} \land \mathbb{F}$ \texttt{unsat} for
$0 \leqslant i \leqslant k$; and (4) it satisfies \emph{consecution}:
${F_i}(\vec{p}) \wedge \mathrm{T}(\vec{p}, \vec{p'}) \land \neg
{F_{i+1}}(\vec{p'})$ \texttt{unsat} for $0 \leqslant i < k$.

By construction, each {frame} $F_i$ in the OARS is defined as a
set of clauses, $\CL({F_i})$, meaning that ${F_i}$ is built as a
formula in CNF: $F_i = \bigwedge_{\cl \in \CL(F_i)} \cl$. We also enforce
that $\CL(F_{i+1}) \subseteq \CL(F_i)$ for $0 \leqslant i < k$, which
means that the monotonicity property between frames is trivially
ensured.

The body of function \hyperref[fun:prove]{\texttt{prove}} contains a \emph{main iteration}
(line~\ref{l:main_iteration}) that increases the value of $k$ (the
number of levels of the OARS). At each step, we enter a second, minor
iteration (line~\ref{alg:minor_iteration} in function
\hyperref[fun:strengthen]{\texttt{strengthen}}), where we generate new minimal inductive clauses
that will be propagated to all the frames. Hence both the length of
the OARS, and the set of clauses in its frames, increase during
computation.

The procedure stops when we find an index $i$ such that
$F_i = F_{i+1}$. In this case we know that $F_i$ is an inductive
invariant satisfying $\bbP$. We can also stop during the iteration if
we find a counter-example (a model $m$ of $\bbF$). In this case, we
can also return a trace leading to $m$.

\begin{function}[tb]
  \DontPrintSemicolon 
  \SetKwFunction{sat}{sat}
  \SetKwFunction{strengthen}{strengthen}
  \SetKwFunction{propagateclauses}{propagateClauses}
  
  \KwResult{$\bot$ if $\bbF$ is reachable ($\bbP = \neg\bbF$ is not an invariant), otherwise $\top$} \BlankLine

  \If{\sat{$\bbI(\vec{p}) \land T(\vec{p}, \vec{p'}) \land \bbF(\vec{p'})$}} {\KwRet{$\bot$}} \BlankLine

  $k \gets 1$,  $F_0 \gets \bbI$, $F_1 \gets \bbP$\;
 
  \While{$\top$\label{l:main_iteration}}{\If{\textbf{not} \strengthen{$k$}}{\KwRet{$\bot$}}

    \propagateclauses{$k$}\;

    \If{$\CL(F_i) = \CL(F_{i+1})$ \textbf{for some} $1 \leqslant  i \leqslant
      k$}{\KwRet{$\top$}}

    $k \gets k + 1$\;}

  \caption{prove(\bbI, \bbF: \protect{linear predicates})}
  \label{fun:prove}
\end{function}

When we start the first minor iteration, we have $k = 1$, $F_0 = \bbI$
and $F_{1} = \bbP$. If we have
$F_k(\vec{p}) \wedge T(\vec{p}, \vec{p'}) \wedge \mathbb{F}(\vec{p})$
\texttt{unsat}, it means that \bbP is inductive, so we can stop and
return that \bbP is an invariant. Otherwise, we proceed with the
strengthen phase, where each model of
$F_k(\vec{p}) \wedge T(\vec{p}, \vec{p'}) \wedge \mathbb{F}(\vec{p})$
becomes a potential counter-example, or \emph{witness}, that we need
to ``block'' (line $3$--$5$ of function \hyperref[fun:strengthen]{\texttt{strengthen}}).

Instead of blocking only one witness, we first generalize it into a
predicate that abstracts similar dangerous states (see the call to
\texttt{generalizeWitness}). This is done by applying one of the
three generalization results in Lemma~\ref{lemma:g}. We give more
details about this step later. By construction, each generalization
is a cube $s$ (a conjunction of literals). Hence, when we block it, we
learn new clauses from $\neg s$ that can be propagated to the previous
frames.

\begin{function}[tb]
  \DontPrintSemicolon 

  \SetKwProg{try}{try}{:}{} \SetKwProg{catch}{catch}{:}{end}

  \SetKwFunction{sat}{sat}
  \SetKwFunction{generalizedwitness}{generalizeWitness}
  \SetKwFunction{inductivelygeneralize}{inductivelyGeneralize}
  \SetKwFunction{pushgeneralization}{pushGeneralization}

  \try{}
  {\While{$(m \trans{t} m') \models\ F_k(\vec{p}) \land T(\vec{p}, \vec{p'}) \land \bbF(\vec{p'})$\label{alg:minor_iteration}}
    {
      \textcolor{red}{$s \gets \generalizedwitness{m, t, \bbF}$}\;\label{l:strengten_witness}
      $n \gets \inductivelygeneralize{s, k - 2, k}$\;
    $\pushgeneralization{\{(s, n+1)\}, k}$\;
    }
    \KwRet{$\top$}}
    
    \BlankLine

  \catch{\text{counter example}}{\KwRet{$\bot$}}

  \caption{strengthen($k$ : \protect{current level})}
  \label{fun:strengthen}
\end{function}

\begin{procedure}[tb]
  \DontPrintSemicolon 

  \SetKwFunction{unsat}{unsat}
  
  \For{$i \gets 1$ \textbf{to} $k$} {\ForEach{$cl \in CL(F_i)$}{\If{{$\nvDash
  F_i(\vec{p}) \land T(\vec{p}, \vec{p'}) \land \neg cl(\vec{p'})$}}
  {$\CL(F_{i+1}) \gets \CL(F_i) \cup \{cl\}$\;}}} \caption{propagateClauses($k$:
  \protect{level})}
\label{proc:propagate_clauses}
\end{procedure}

\begin{procedure}[tb]
  \DontPrintSemicolon 

  \SetKwFunction{unsatcore}{unsat\_core}

  $cl \gets \neg \, $\unsatcore{$\neg s(\vec{p}) \land F_i(\vec{p}) \land
  T(\vec{p}, \vec{p'}) \land s(\vec{p'})$} \; \For{$j \gets 1$ \textbf{to} i+1}
  {$\CL(F_j) \gets \CL(F_j) \cup \{cl\}$}

  \caption{generateClause($s$ : cube,  $i$: level, $k$: \protect{level})}
  \label{proc:generate_clause}
\end{procedure}

Before pushing a new clause, we test whether $s$ is reachable from
previous frames. We take advantage of this opportunity to find if we
have a counter-example and, if not, to learn new clauses in the
process. This is the role of functions \hyperref[proc:push_generalization]{\texttt{pushGeneralization}}
and \hyperref[fun:inductively_generalize]{\texttt{inductivelyGeneralize}}.


We find a counter example (in the call to \hyperref[fun:inductively_generalize]{\texttt{inductivelyGeneralize}}) if
the generalization from a witness found at level $k$, say $s$, reaches level $0$
and $F_0(\vec{p}) \land T(\vec{p}, \vec{p'}) \land s(\vec{p'})$ is satisfiable
(line $1$ in \hyperref[fun:inductively_generalize]{\texttt{inductivelyGeneralize}}). Indeed, it means that we can
build a trace from $\bbI$ to $\bbF$ by going through $F_1, \dots, F_k$.

The method relies heavily on checking the satisfiability of linear
formulas in QF-LIA, which is achieved with a call to a SMT solver. In
each function call, we need to test if predicates of the form
$F_i \land T \land G$ are \texttt{unsat} and, if not, enumerate its
models. To accelerate the strengthening of frames, we also rely on the
unsat core of properties in order to compute a \emph{minimal inductive
  clause} (MIC).

\begin{function}[tb]
  \DontPrintSemicolon 
  
  \SetKwFunction{sat}{sat} \SetKwFunction{generateclause}{generateClause}

  \If{$min < 0$ \textbf{and} \sat{$F_0(\vec{p}) \land T(\vec{p}, \vec{p'}) \land s(\vec{p'})$}} {\textbf{raise} $Counterexample$}
  \BlankLine

  \For{$i \gets \max(1, min + 1)$ \textbf{to} $k$} {\If{\sat{$F_i(\vec{p}) \land
  T(\vec{p}, \vec{p'}) \land \neg s(\vec{p}) \land s(\vec{p'})$}}
  {$\generateclause{s, i-1, k}$\;\KwRet{$i - 1$}}} $\generateclause{s, k, k}$\;
  \KwRet{$k$}

  \caption{inductivelyGeneralize($s$ : cube,  $min$: level, $k$: \protect{level})}
  \label{fun:inductively_generalize}
\end{function}

\begin{function}[tb]
  \DontPrintSemicolon 

  \SetKwFunction{sat}{sat}
  \SetKwFunction{generalizedwitness}{generalizeWitness}
  \SetKwFunction{inductivelygeneralize}{inductivelyGeneralize}
  
  \While{$\top$} {$(s,n) \gets \text{from } states \text{ minimizing } n$\;

    \If{$n > k$} {\KwRet{}}

    \If{$(m \trans{t} m') \models F_n(\vec{p}) \land T(\vec{p}, \vec{p'}) \land s(\vec{p'})$} {\textcolor{red}{$p
    \gets \generalizedwitness{m, t, s}$}\;\label{l:pushgeneralization_witness}
     $l \gets \inductivelygeneralize{p, n - 2,
    k}$\; $states \gets states \cup \{(p, l + 1)\}$\;} \Else{$l \gets
    \inductivelygeneralize{s, n, k}$\; $states \gets states \setminus \{(s, n)\}
    \cup \{(s, l + 1)\}$\;}}

  \caption{pushGeneralization($states$: \text{set of} (state, level), $k$: \protect{level})}
  \label{proc:push_generalization}
\end{function}

Our approach is parametrized by a generalization function
(\texttt{generalizeWit\-ness}) that is crucial if we want to avoid
enumerating a large, potentially unbounded, set of witnesses. This can
be the case, for example, in line $5$
of~\hyperref[proc:push_generalization]{\texttt{pushGene\-ralization}}. In this particular case, we find a
state $m$ at level $n$ (because $m \models F_n$), and a transition $t$
that leads to a problematic clause in $F_{n+1}$. Therefore we have a
sequence $\sigma$ of size $k - n + 1$ such that $m \wtrans{\sigma} m'$
and $m' \models \bbF$. We consider three possible methods for
generalizing the trio $(m, \sigma, \bbF)$, that corresponds to
property (G1)--(G3) in Lemma~\ref{lemma:g}.


\subsection{State-based Generalization}

A special case of the reachability problem is when the predicate
$\bbF$ is monotonic,, meaning that $m_1 \models \bbF$ entails
$m_1 + m_2 \models \bbF$ for all markings $m_1, m_2$. A sufficient
(syntactic) condition is for $\bbF$ to be a positive formula with
literals of the form $\sum_{i \in I} p_i \geq a$.  This class of
predicates coincide with what is called a \emph{coverability
  property}, for which there exists specialized verification methods
(see e.g.~\cite{finkel1991minimal,DBLP:journals/deds/FinkelHK21}).

By property (G1), If we have to block a witness $m$ such that $m \wtrans{\sigma}
m'$ and $m' \models \bbF$, we can as well block all the states greater than $m$.
Hence we can choose the predicate $\mathrm{GEQ}_m$ to generalize $m$.
This is a very convenient case for verification and one of the
optimizations used in previous works on PDR for Petri
nets~\cite{pn2021,esparza_smt-based_2014,DBLP:conf/apn/KangBJ21,kloos_incremental_2013}.
First, the generalization is very simple and we can easily compute a
MIC when we block predicate $\mathrm{GEQ}_m$ in a frame. Also, we can
prove the completeness of the procedure when $\bbF$ is monotonic. An
intuition is that it is enough, in this case, to check the property on
the minimal coverability set of the net, which is always
finite~\cite{finkel1991minimal}. The procedure is also complete for
finite transition systems. These are the only cases where we have been
able to prove that our method always terminates.

\subsection{Transition-based Generalization}

We propose a new generalization based on the notion of hurdles. This
approach can be used when $\bbF$ is not monotonic, for example when we
want to check an invariant that contains literals of the form $p = k$
(e.g. the reachability of a fixed marking) or $p \geqslant q$.

Assume we need to block a witness of the from
$m \wtrans{\sigma} m' \models s$. Typically, $s$ is a cube in
$\bbF$, or a state resulting from a call to
\hyperref[proc:push_generalization]{\texttt{pushGeneralization}}. By property (G2), we can as well block
all the states satisfying
$G_\sigma(\vec{p}) \sdefeq \mathrm{GEQ}_{H(\sigma)}(\vec{p}) \land
s(\vec{p} + \Delta(\sigma))$. 
This generalization is interesting when property $s$ does not
constraint all the places, or when we have few equality
constraints. In this case $G_\sigma$ may have an infinite number of
models.
It should be noted that using the duality between ``feasible traces''
and hurdles is not new. For example, it was used
recently~\cite{DBLP:journals/deds/FinkelHK21} to accelerate the
computation of coverability trees. Nonetheless, to the best of our
knowledge, this is the first time that this generalization method has
been used with PDR.

\subsection{Saturated Transition-based Generalization}

We still assume that we start from a witness
$m \wtrans{\sigma} m' \models s$. Our last method relies on property
(G3) and allows us to consider several iterations of $\sigma$. If we
fix the value of $k$, then a possible generalization is
$G^k_\sigma \sdefeq \left ( \bigwedge_{i \in 1..n} (p_i \geqslant a(i)
  + k \cdot b(i)) \right ) \land s(\vec{p} + (k + 1) \cdot
\Delta(\sigma))$, where $a, b$ are the mappings of $\Nat^P$ defined in
Lemma~\ref{lemma:g}. (Notice that $G^1_\sigma = G_\sigma$.)
More generally the predicate
$G^{\leqslant k}_\sigma = G^1_\sigma \vee \dots \vee G^k_\sigma$ is a valid
generalization for the witness $(m, \sigma, s)$, in the sense that if
$m_1 \models G^{\leqslant k}_\sigma$ then there is a trace
$m_1 \to^\star m_2$ such that $m_2 \models s$.
At the cost of using existential quantification (and therefore a
``top-level'' universal quantification when we negate the predicate to
block it in a frame), we can use the more general predicate
$G^\star_\sigma \sdefeq \exists k . G^k_\sigma$, which is still linear
and has its support in $P$.

We know examples of invariants where the PDR method does not terminate
except when using saturation. A simple example is the net {Parity},
used as an example in Sect.~\ref{sec:petri-nets}, with the invariant
$\bbP = (p \geqslant 1)$. In this case, $\bbF = \neg \bbP = (p =
0)$. Hence we are looking for witnesses such that $m \to^\star 0$. The
simplest example is $2 \trans{t_2} 0$, which corresponds to the
``blocking clause'' $p \neq 2$.  In this case, we have $H(t_2) = 2$
and $\Delta(t_2) = - 2$. Hence the transition-based generalization is
$(p \geq 2) \land (p - 2 = 0) \equiv (p = 2)$, which does not block
new markings. At this point, we try to block $(p = 0) \lor (p =
2)$. The following minor iteration of our method will consider the
witness $4 \wtrans{t_2.t_2} 0$, etc. Hence after $k$ minor iterations,
we have $F_k \equiv (p \neq 0) \land (p \neq 2) \land \dots \land (p \neq 2 k)$.
If we saturate $t_2$, we find in one step that we should block
$\exists k . (p - 2 \cdot (k + 1) = 0)$. This is enough to prove that
$(p \geqslant 1)$ is an invariant as soon as the initial marking is an
odd number.

This example proves that PDR is not complete, without saturation, in
the general case. We conjecture that it is also the case with
saturation. Even though example Parity is extremely simple, it is also
enough to demonstrate the limit of our method without
saturation. Indeed, when we only allow unquantified linear predicates
with variables in $P$, it is not possible to express all the possible
semilinear sets in $\Nat^P$. (We typically miss some periodic sets.)
In practice, it is not always useful to saturate a trace and, in our
implementation, we use heuristics to limit the number of
quantifications introduced by this operation. Actually, nothing
prevents us from mixing our different kinds of generalization
together, and there is still much work to be done in order to find
good tactics in this case.


\section{Experimental Results}
\label{sec:experimental-results}

We have implemented our complete approach in a tool, called
\textsc{SMPT} (for Satisfiability Modulo P/T Nets), and made our code
freely available under the GPLv3 license. The software, scripts and
data used to perform our analyses are available on Github
(\url{htttps://github.com/nicolasAmat/SMPT}) and are archived in
Zenodo~\cite{zenodoamat}. The tool supports the declaration of
reachability constraints expressed using the same syntax as in the
Reachability examinations of the Model Checking Contest (MCC). For
instance, we use PNML as the input format for nets. \textsc{SMPT}
relies on a SMT solver to answer \texttt{sat} and \texttt{unsat-core}
queries. It interacts with SMT solvers using the SMT-LIBv2 format,
which is a well-supported interchange format. We used the \textsc{z3}
solver for all the results presented in this section.

\subsection{Evaluation on Expressiveness}

It is difficult to find
benchmarks with unbounded Petri nets. To quote Blondin et
al.~\cite{blondin_directed_2021}, ``due to the lack of tools handling
reachability for unbounded state spaces, benchmarks arising in the
literature are primarily coverability instances''. It is also very
difficult to randomly generate a true invariant that does not follow,
in an obvious way, from the state equation. For this reason, we
decided to propose our own benchmark, made of five synthetic examples
of nets, each with a given invariant. This benchmark is freely
available and presented as an archive similar to instances of problems
used in the MCC.

Our benchmark is made of deceptively simple nets that have been
engineered to be difficult or impossible to check with current
techniques. We already depicted our two first examples in
Fig.~\ref{fig:nets}. We display all our other examples in Figs.~$2$,
$3$ and~$4$.


\begin{figure}
  \centering
  \includegraphics[width=0.77\textwidth]{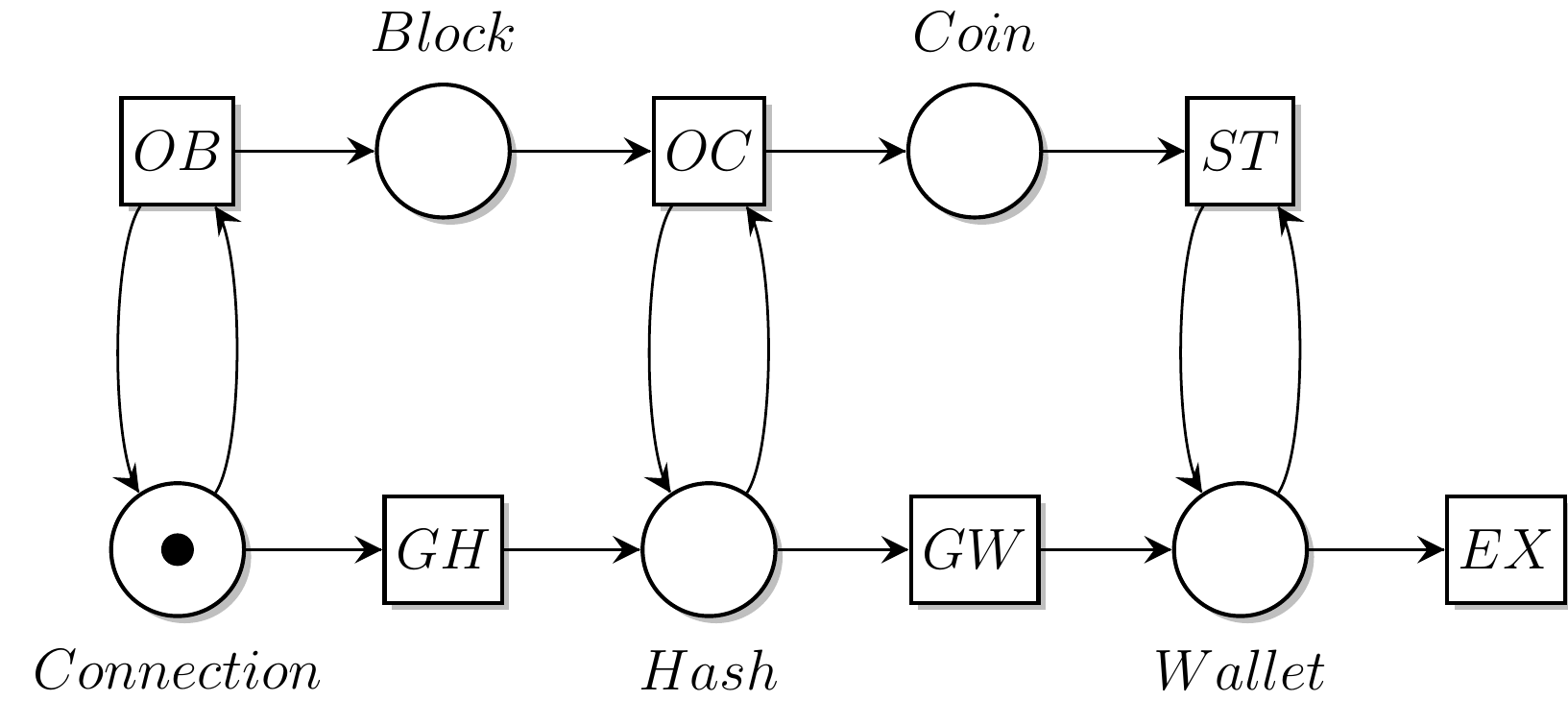}
  \caption{CryptoMiner with $\bbP = \neg (Block = 4 \land Connection = 1 \land Coin = 10)$}
  \label{fig:cryptominer}
\end{figure}

\begin{figure}
  \centering
  \includegraphics[width=0.55\textwidth]{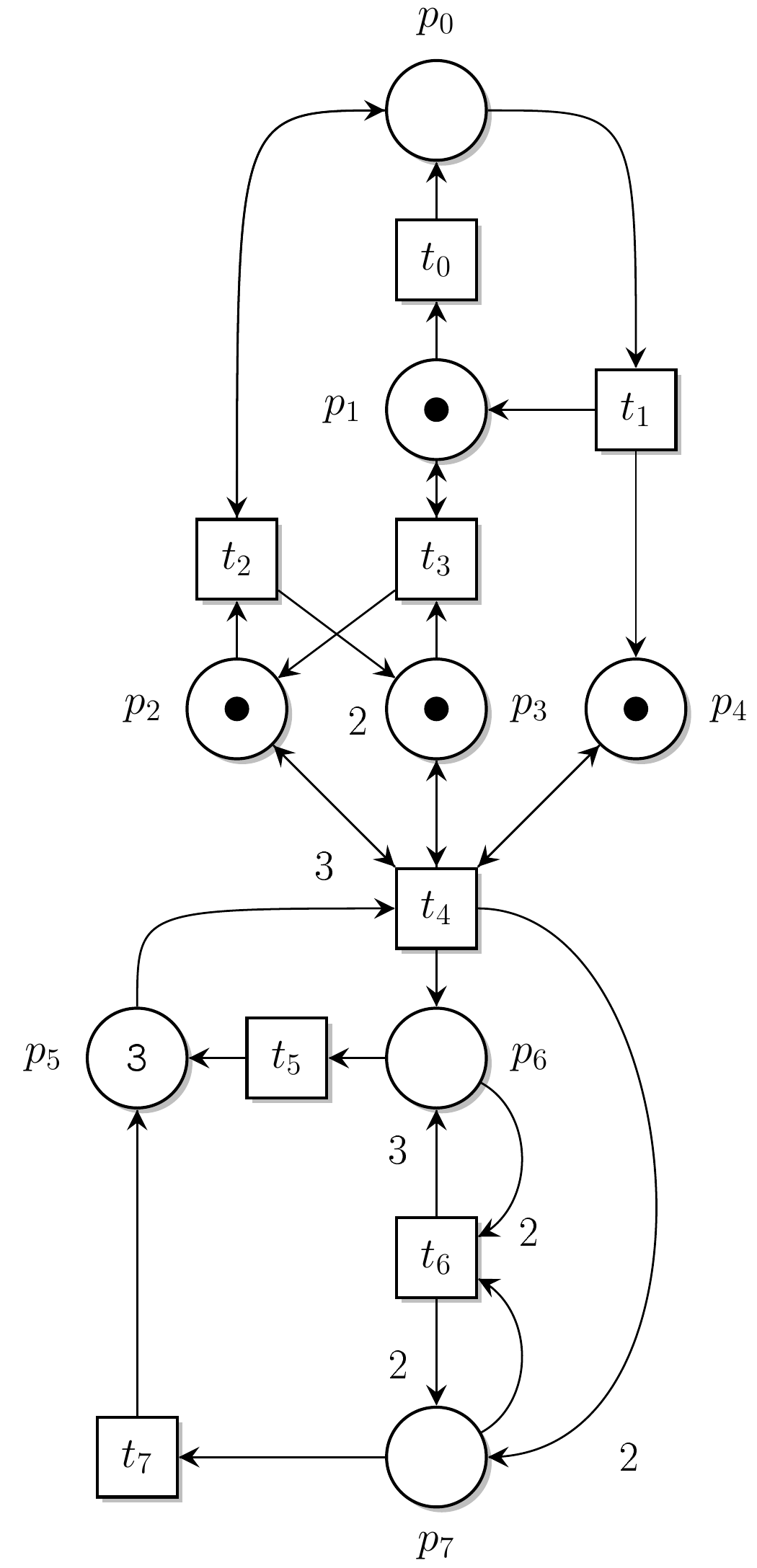}
  \caption{Process with $\bbP = (p_2 + p_3 + p_4 \geqslant 1 \land p_7 \leqslant 2)$}
  \label{fig:process}
\end{figure}

\begin{figure}
  \centering
  \includegraphics[width=0.88\textwidth]{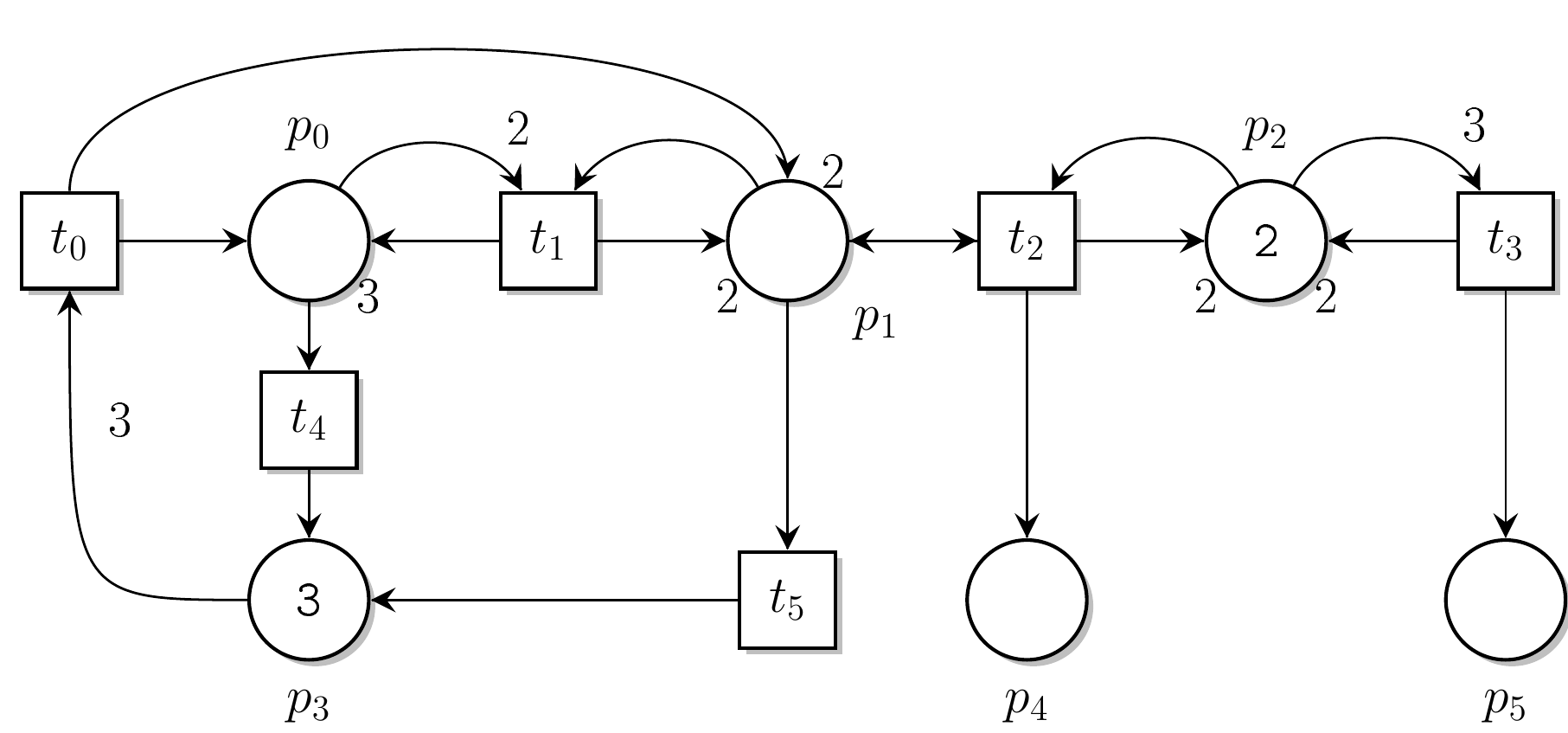}
  \caption{Murphy with $\bbP = (p_1 \leqslant 2 \land p_4 \geqslant p_5)$}
  \label{fig:murphy}
\end{figure}


\newcommand{\TLE}{\cellcolor{gray!8}\color{purple}{\makebox[5em]{TLE}}}
\newcommand{\WIN}[1]{\bfseries ${{\SI{#1}{}}}$}

\begin{table}[tb]
  \centering
\newcolumntype{x}[1]{>{\centering\arraybackslash\hspace{0pt}}p{#1}}
\begin{center}
  \begin {tabular}{l c@{\qquad} c@{}l l@{\quad} w{c}{5em} l@{\quad} w{c}{5em} l@{\quad} w{c}{5em}}%
    \toprule

    Instance && \multicolumn{2}{c}{\textsc{SMPT}}
    && \textsc{ITS-Tools} && \textsc{LoLA}
    && \textsc{Tapaal}\\\midrule
    Murphy && \WIN{0.75} &${}^*$ && \TLE && \TLE && \TLE\\

    PGCD && \WIN{0.11} &${}^*$ && \SI{139.08}{} && \TLE && \TLE\\

    CryptoMiner && \SI{0.19}{} &${}^*$ && \SI{5.92}{} && \TLE && \WIN{0.18}\\

    Parity  && \SI{0.40}{} &${}^*$ && \SI{3.36}{} && \WIN{0.01} && \SI{4.16}{}\\

    Process && \SI{83.39}{} & && \TLE && \WIN{0.03} && \SI{0.18}{}\\

    \bottomrule
  \end{tabular}
\end{center}
\caption{Computation time on our synthetic examples (time in seconds).\label{tab1}}
\end{table}

We give a brief description of the nets composing our ``reachability''
benchmark (except for Parity and PGCD from Fig.~\ref{fig:nets} already
described previously). Each of our example is quite small, with less
than $10$ places or transitions, and is representative of patterns
that can make the reachability problem harder: the use of self-loops;
dead transitions that cannot be detected with the state equation;
weights that are relatively prime; etc. Also, most of our examples can
be turned into families of nets using parameters such has the initial
marking, weights on the arcs, or by adding copies of a sub-net.

\begin{itemize}
\item \textbf{CryptoMiner} describes the, simplified, daily schedule
  of someone mining bitcoins. The net is composed of two disjoint
  state machines synchronized by self-loops (trivial cycles of weight
  1). Removing the self-loops do not modify the incidence matrix, and
  so do not change the solutions of the state equation. The difficulty
  when analysing this net lies in the presence of constraints that
  cannot be derived from the state equation alone. For instance, the
  presence of tokens in $\mathit{Coin}$ implies $\mathit{Connection}$
  empty.

\item \textbf{Process} is a net composed of three subnets coupled by
  self-loops on the places $p_2$, $p_3$ and $p_4$. The component at
  the bottom includes a dead transition ($t_6$); it will never be
  enabled although the state equation ensures at least one possibility
  of firing it. Like with our previous example, reasoning only on the
  state equation is not enough to capture the exact behaviour of this
  net. For instance, the state equation allows to get $3$ tokens in
  $p_7$, which would contradict our invariant.

\item \textbf{Murphy} is a net combining PGCD with the ``bottom
  component'' of net Process.
\end{itemize}

We compared \textsc{SMPT} against \textsc{ITS-Tools}, \textsc{LoLA},
and \textsc{Tapaal} and give our results in Table~\ref{tab1}. All
results are computed using $4$ cores, a limit of $\SI{16}{\giga\byte}$
of RAM, and a timeout of $\SI{1}{\hour}$. A result of TLE stands for
``Time Limit Exceeded''. For \textsc{SMPT}, we marked with an asterisk
(${}^*$) the results computed using our saturation-based
generalization.
Our results show that \textsc{SMPT} is able to answer on several
classes of examples that are out of reach for some, or all the other
tools; often by orders of magnitude.

We also experimented with two other tools for reachability recently
presented at TACAS: \textsc{KReach}~\cite{dixon_kreach_2020}, that
provides a complete implementation of Kosaraju's original decision
procedure, and \textsc{FastForward}~\cite{blondin_directed_2021}, a
tool for efficiently finding counter-examples in unbounded Petri nets
(but that may report that an invariant is true in some cases). We do
not include these tools in our findings since they were unable to
answer any of our problems. (But input files for our benchmark, for
both tools, are available in our artifact.)

\subsection{Computing Certificate of Invariance}

A distinctive
feature of \textsc{SMPT} is the ability to output a linear inductive
invariant for reachability problems: when we find that \bbP is
invariant, we are also able to output an inductive formula
$\mathbb{C}$, of the form $\bbP \land G$, that can be checked
independently with a SMT solver. We can find the same capability in
the tool \textsc{Petrinizer}~\cite{esparza_smt-based_2014} in the case
of coverability properties.

To get a better sense of this feature, we give the actual outputs
computed with \textsc{SMPT} on the two nets of
Fig.~\ref{fig:nets}. The invariant for the net Parity is
$\bbP_1 = (p_0 \geqslant 1)$, and for PGCD it is
$\bbP_2 = (p_1 \leqslant p_2)$

The certificate for property $\bbP_1$ on Parity is
$\mathbb{C}_1 \equiv (p_0 \geqslant 1) \land \forall k . ( (p_0 < 2\,k
+ 2) \lor (p_0 \geqslant 2\,k + 3) )$, which is equivalent to
$(p_0 \geqslant 1) \land (\forall k \geqslant 1) . (p_0 \neq 2.k)$,
meaning the marking of $p_0$ is odd. This invariant would be different
if we changed the initial
marking to an even number.\\[-1.5em]
{\footnotesize{%
\begin{verbatim}
[PDR] Certificate of invariance
# (not (p0 < 1))
# (forall (k1) ((p0 < (2 + (k1 * 2))) or (p0 + (-2 * (k1 + 1))) >= 1))
\end{verbatim}%
}}

The certificate for property $\bbP_2$ on PGCD is
$\mathbb{C}_2 \equiv (p_1 \leqslant p_2) \land \forall k . ((p_0 < k +
3) \lor (p_2 - p_1 \geqslant k + 1) )$ and may seem quite
inscrutable. It happens actually that the saturation ``learned'' the
invariant $p_0 + p_1 = p_2 + 2$ and was able to use this information
to strengthen property $\bbP_2$ into an inductive invariant.\\[-1.5em]
{\footnotesize{%
\begin{verbatim}
[PDR] Certificate of invariance
# (not (p1 > p2))
# (forall (k1) ((p0 < (3 + (k1 * 1))) or ((p1 + (1 * (k1 + 1))) <= p2)) 
\end{verbatim}%
}}

\subsection{Evaluation on Performance}

Since it is not sufficient
to use only a small number of hand-picked examples to check the
performance of a tool, we also provide results obtained on a set of
$30$ problems (a net together with an invariant) that are borrowed
from test cases used by the tool
\textsc{Sara}~\cite{saratata,wimmel2012applying}
 (examples test$\{\mathrm{3,4,12}\}$)
and a similar software, called
\textsc{Reach}, that is part of the \textsc{Tina}
toolbox~\cite{berthomieu2004tool}
 (examples 1, 3u, \dots, zz).
Most of these problems can be easily answered, but are interesting to
test our reliability on a relatively even-handed benchmark.

Our benchmark also include $6$ examples of bounded nets obtained by
limiting the number of times we can fire transitions in the nets PGCD
and CryptoMiner. (This is achieved by adding a new place that loses a
token when a transition is fired.)

The experiments were performed with the same conditions as previously,
but with a timeout of only 255s.  We display our results in the chart
of Fig.~\ref{fig:cactus}, which gives the number of feasible problems,
for each tool, when we change the timeout value. We also provide the
computation times, for the same dataset, in Table~\ref{tab2}. We
observe that our performances are on par with \textsc{Tapaal}, which
is the fastest among our three reference tools on this benchmark.

\renewcommand{\TLE}{\cellcolor{gray!8}\color{purple}{\makebox[4em]{TLE}}}

\begin{table}[tb]
  \centering
\newcolumntype{x}[1]{>{\centering\arraybackslash\hspace{0pt}}p{#1}}
\begin{center}
  \begin {tabular}{l c@{\qquad} c@{}l l@{\quad} w{c}{5em} l@{\quad} w{c}{5em} l@{\quad} w{c}{5em}}%
    \toprule

    Instance && \multicolumn{2}{c}{\textsc{SMPT}}
    && \textsc{ITS-Tools} && \textsc{LoLA}
    && \textsc{Tapaal}\\\midrule

         1         && \WIN{0.15} & && \SI{0.78}{}  && \SI{5.01}{} && \SI{0.17}{}\\
         3u        && \SI{1.84}{} &${}^*$&& \SI{0.80}{}  && \WIN{0.01} && \SI{0.16}{}\\
         5pi        && \SI{6.86}{} & && \SI{0.88}{}  && \WIN{0.01} && \SI{0.17}{}\\
         6pi        && \SI{0.21}{} & && \SI{0.88}{}  && \WIN{0.01} && \SI{0.16}{}\\
        7pi        && \WIN{0.15} & && \SI{0.78}{}  && \SI{5.00}{} && \SI{0.16}{}\\
    Crypto    && \SI{0.20}{} &${}^*$&& \SI{4.94}{} &&   \TLE   && \WIN{0.16}\\
 Crypto-10000 && \SI{0.25}{} &${}^*$&& \SI{4.88}{}  && \WIN{0.02} && \SI{0.16}{}\\
   Crypto-50  && \SI{0.22}{} &${}^*$&& \SI{1.04}{}  && \WIN{0.01} && \SI{0.18}{}\\
  Crypto-500  && \SI{0.24}{} &${}^*$&& \SI{4.57}{}  && \WIN{0.01} && \SI{0.17}{}\\
     PGCD-10000    && \WIN{0.14} &${}^*$&& \SI{142.63}{} &&   \TLE  && \SI{96.59}{}\\
      PGCD-50      && \SI{0.10}{} &${}^*$&& \SI{0.87}{}  && \WIN{0.01} && \SI{0.17}{}\\
      PGCD-500     && \SI{0.11}{} &${}^*$&& \SI{1.13}{}  && \WIN{0.08} && \SI{0.30}{}\\
         b         && \WIN{0.09} & && \SI{0.79}{}  && \SI{5.02}{} && \SI{0.16}{}\\
        kw2        && \SI{0.18}{} & && \SI{0.78}{}  && \SI{5.01}{} && \WIN{0.16}\\
        mtx        && \SI{0.60}{} & && \SI{0.86}{}  && \SI{0.00}{} && \WIN{0.16}\\
        nope       && \WIN{0.12} & && \SI{0.78}{} && \SI{5.01}{} && \SI{0.16}{}\\
       nope2       && \WIN{0.10} & && \SI{0.76}{} && \SI{5.01}{} && \SI{0.17}{}\\
       test12      && \SI{0.10}{} & && \SI{0.76}{} && \SI{5.00}{} && \WIN{0.05}\\
       test3       && \WIN{0.15} & && \SI{0.91}{} && \SI{5.02}{} && \SI{0.18}{}\\
       test4       && \SI{0.15}{} & && \SI{0.82}{} && \WIN{0.01} && \SI{0.17}{}\\
         u         && \WIN{0.09} & && \SI{0.77}{} && \SI{5.00}{} && \SI{0.17}{}\\
         w         && \WIN{0.10} & && \SI{0.79}{} && \SI{5.02}{} && \SI{0.16}{}\\
         w1        && \SI{0.10}{} & && \SI{0.75}{} && \SI{5.01}{} && \WIN{0.05}\\
         w2        && \WIN{0.10} & && \SI{0.80}{} && \SI{5.00}{} && \SI{0.17}{}\\
         wb        && \SI{0.29}{} &${}^*$ && \SI{0.80}{} && \SI{5.00}{} && \WIN{0.17}\\
         we        && \WIN{0.16} & && \SI{0.78}{} && \SI{5.01}{} && \SI{0.17}{}\\
         x         && \SI{1.24}{} &${}^*$&& \SI{0.84}{} && \WIN{0.01} && \SI{0.16}{}\\
         z         && \WIN{0.10} & && \SI{1.22}{} && \SI{5.00}{} && \SI{0.30}{}\\
         ze        && \SI{0.71}{} & && \SI{0.88}{} && \SI{5.03}{} && \WIN{0.17}\\
         zz        && \WIN{0.12} &${}^*$&& \SI{1.64}{} && \SI{5.01}{} && \SI{0.25}{}\\
  \end{tabular}
\end{center}
\caption{Computation times with existing benchmarks (time in
  seconds) \label{tab2}}
\end{table}

\begin{figure}
  \centering
  \includegraphics[width=\textwidth]{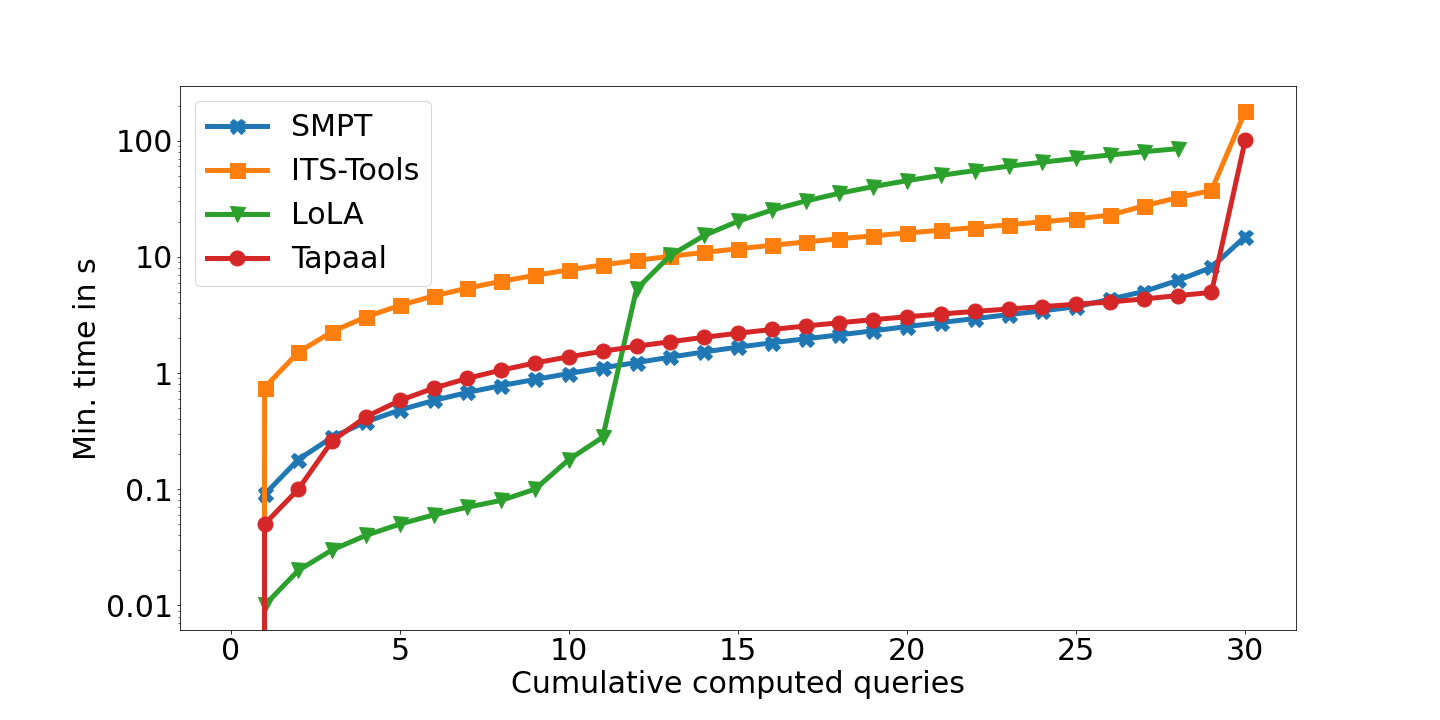}
  \caption{Minimal timeout to compute a given number of queries.}
  \label{fig:cactus}
\end{figure}

Our tool is actually quite mature. In particular, a preliminary
version of SMPT~\cite{pn2021} (without many of the improvements
described in this work) participated in the 2021 edition of the
MCC, where we ranked fourth, out of five competitors,
and achieved a reliability in excess of $99.9\%$.

Even if it was with a previous version of our tool, there are still
lessons to be learned from these results. In particular, it can inform
us on the behavior of \textsc{SMPT} on a very large and diverse
benchmark of bounded nets, with a majority of reachability properties
that are not invariants.

We can compare our results with those of \textsc{LoLA}, that fared
consistently well in the reachability category of the
MCC. \textsc{LoLA} is geared towards model checking of finite state
spaces, but it also implements semi-decision procedures for the
unbounded case. Out of $45\,152$ reachability queries at the MCC in
2021 (one instance of a net with one formula), \textsc{LoLA} was able
to solve $85\%$ of them ($38\,175$ instances) and \textsc{SMPT} only
$52\%$ ($23\,375$ instances); it means approximately $\times 1.6$ more
instances solved using \textsc{LoLA} than using \textsc{SMPT}. Most of
the instances solved with \textsc{SMPT} have also been solved by
\textsc{LoLA}; but still $1\,631$ instances are computed only with our
tool, meaning we potentially increase the number of computed queries
by $4\%$. This is quite an honorable result for \textsc{SMPT},
especially when we consider the fact that we use a single technique,
with only a limited number of optimizations.


\section{Conclusion and Related Works}
\label{sec:conclusion}

One of the most important results in concurrency theory is the
decidability of reachability for Petri nets or, equivalently, for
Vector Addition Systems with States (VASS)~\cite{kosaraju}. Even if
this result is based on a constructive proof, and its ``construction''
streamlined over time~\cite{leroux2009general}, the classical
Kosaraju-Lambert-Mayr-Sacerdote-Tenney approach does not lead to a
workable algorithm. It is in fact a feat that this algorithm has been
implemented at all, see e.g. the tool
\textsc{KReach}~\cite{dixon_kreach_2020}.
While the (very high) complexity of the problem means that no single
algorithm could work efficiently on all inputs, it does not prevent
the existence of methods that work well on some classes of
problems. For example, several algorithms are tailored for the
discovery of counter-examples. We mention the tool
\textsc{FastForward}~\cite{blondin_directed_2021} in our experiments,
that explicitly targets the case of unbounded nets.

We propose a method that works as well on bounded as on unbounded
ones; that behaves well when the invariant is true; and that works with
``genuine'' reachability properties, and not only with
coverability. But there is of course no panacea.  Our approach relies
on the use of linear predicates, which are incrementally strengthened
until we find an invariant based on: the transition relation of the
net; the property we want to prove (it is ``property-directed''); and
constraints on the initial states.
This is in line with a property proved by
Leroux~\cite{leroux2009general}, which states that when a final
configuration is not reachable then ``\emph{there exist checkable
  certificates of non-reachability in the Presburger arithmetic.}''
Our extension of PDR provides a constructive method for computing such
certificates, when it terminates.
For our future works, we would like to study more precisely the
completeness of our approach and/or its limits.

This is not something new. There are many tools that rely on the use
of integer programming techniques to check reachability properties. We
can mention the tool \textsc{Sara}~\cite{wimmel2012applying}, that is
now integrated inside \textsc{LoLA} and can answer reachability
problems on unbounded nets; or libraries like
\textsc{Fast}~\cite{bardin2008fast}, designed for the analysis of
systems manipulating unbounded integer variables. An advantage of our
method is that we proceed in a lazy way. We never explicitly compute
the structural invariants of a net, never switch between a Presburger
formula and its representation as a semilinear set (useful when one
wants to compute the ``Kleene closure'' of a linear constraint), \dots
and instead let a SMT solver work its magic.

We can also mention previous works on adapting PDR/IC3 to Petri
nets. A first implementation of SMPT was presented in~\cite{pn2021},
where we focused on the integration of structural reductions with
PDR. This work did not use our abstraction methods based on hurdles
and saturation, which are new. We can find other related works, such
as~\cite{esparza_smt-based_2014,DBLP:conf/apn/KangBJ21,kloos_incremental_2013}. Nonetheless
they all focus on coverability properties. Coverability is not only a
subclass of the general reachability problem, it has a far simpler
theoretical complexity (EXPSPACE vs NONELEMENTARY). It is also not
expressive enough for checking the absence of deadlocks or for complex
invariants, for instance involving a comparison between the marking of
two places, such as $p < q$. The idea we advocate is that approaches
based on the generalization of markings are not enough. This is why we
believe that abstractions (G2) and (G3) defined in Lemma~\ref{lemma:g}
are noteworthy.

We can also compare our approach with tools oriented to the
verification of bounded Petri nets; since many of them integrate
methods and semi-decision procedures that can work in the unbounded
case. The best performing tools in this category are based on a
portfolio approach and mix different methods. We compared ourselves
with three tools: \textsc{ITS-Tools} \cite{its_tools}, \textsc{Tapaal}
\cite{tapaal} and \textsc{LoLA} \cite{lola}, that have in common to be
the top trio in the Model Checking Contest~\cite{mcc2019}.
(And can therefore accept a common syntax to describe nets and
properties.) Our main contribution in this context, and one of our
most complex results, is to provide a new benchmark of nets and
properties that can be used to evaluate future reachability algorithms
``for expressiveness''.

The methods closest to ours in these portfolios are Bounded
Model Checking and $k$-induction~\cite{sheeran_checking_2000}, which
are also based on the use of SMT solvers. We can mention the case of
\textsc{ITS-Tools}~\cite{thierry-mieg_structural_2020}, that can build
a symbolic over-approximation of the state space, represented as set
of constraints. This approximation is enough when it is included in
the invariant that we check, but inconclusive otherwise.
A subtle and important difference between PDR and these methods is
that PDR needs only $2 n$ variables (the $\vec{p}$ and $\vec{p'}$),
whereas we need $n$ fresh variables at each new iteration of
$k$-induction (so $k n$ variables in total).  This contributes to the
good performances of PDR since the complexity of the SMT problems are
in part relative to the number of variables involved.
Another example of over-approximation is the use of the so-called
``state equation method''~\cite{state_equation}, that can strengthen
the computations of inductive invariants by adding extra constraints,
such as place invariants~\cite{silva1996linear}, siphons and
traps~\cite{esparza_smt-based_2014,Esparza97verificationof}, causality
constraints, etc. We plan to exploit similar constraints in
\textsc{SMPT} to better refine our invariants.

To conclude, our experiments confirm what we already knew: we always
benefit from using a more diverse set of techniques, and are still in
need of new techniques, able to handle new classes of problems. For
instance, we can attribute the good results of \textsc{Tapaal}, in our
experiments, to their implementation of a Trace Abstraction Refinement
(TAR) techniques, guided by
counter-examples~\cite{cassez2017refinement}. The same can be said
with \textsc{LoLA}, that also uses a {CEGAR}-like
method~\cite{wimmel2012applying}. We believe that our approach could
be a useful addition to these techniques.


\subsubsection*{Acknowledgements.} We would like to thank Alex Dixon,
Philip Offtermatt and Yann Thierry-Mieg for their support when
evaluating their respective tools. Their assistance was essential in
improving the quality of our experiments.


\bibliographystyle{splncs04}
\bibliography{bibfile}

\end{document}
